\documentclass[11pt]{article}

\usepackage[margin=0.9in]{geometry}
\usepackage{amsmath,amssymb,amsthm,mathtools}
\usepackage{booktabs}
\usepackage{microtype}
\usepackage[hidelinks]{hyperref}
\usepackage{xurl}
\newtheorem{theorem}{Theorem}
\newtheorem{lemma}[theorem]{Lemma}

\theoremstyle{remark}

\newcommand{\TEL}{\operatorname{TEL}}
\newcommand{\OPT}{\operatorname{OPT}}

\newcommand{\Rdet}{R_{\mathrm{det}}}

\hypersetup{
  pdftitle={A Tight Linear Deterministic Competitive Ratio for Fully Online KV-Cache Scheduling},
  pdfauthor={Ian D'Ambrosio}
}

\title{A Tight Linear Deterministic Competitive Ratio for\\
Fully Online KV-Cache Scheduling}
\author{Ian D'Ambrosio\\
\small Nth Research Collective\\
\small \texttt{ian@nthresearch.org}}
\date{11 August 2026}

\begin{document}
\maketitle

\begin{abstract}
Jaillet et al. introduced a fully online model for batching nonpreemptive LLM
requests under a growing KV-cache memory constraint.  For total end-to-end
latency they proved that every deterministic algorithm has competitive ratio
$\Omega(\sqrt n)$, while the elementary sequential upper bound is $n$.  We
close this gap.  Let $\Rdet(n,M)$ be the optimal deterministic ratio for
exactly $n$ requests at memory $M$, and let
$\Rdet(n)=\sup_M\Rdet(n,M)$.  For every $n\geq2$ we prove
\[
  \frac{n-1}{12}\leq\Rdet(n)\leq n,
\]
so $\Rdet(n)=\Theta(n)$.  The lower bound releases one memory-filling long
request, observes its deterministic start time, and then releases $n-1$ wide
one-token requests halfway through the long run.  No short request can overlap
the long one, whereas a hindsight schedule runs the two groups in the opposite
order when useful.  The hard instance uses the explicit fixed memory
$M=2(n-1)n$.  The upper bound is achieved by a uniform causal serial policy.
The exact model, causality argument, both comparator branches, and quantifier
order are machine-checked in Lean 4.  Exact finite controls and replay commands
accompany the proof.
\end{abstract}

\noindent\textbf{Keywords.}
online scheduling; competitive analysis; LLM inference; KV cache; total flow
time; machine-checked proof

\section{Introduction}

Continuous batching in large-language-model inference permits many requests to
generate tokens in the same GPU iteration.  The principal coupling is memory:
the KV cache of a request grows by one unit after each generated token and is
released only when that request finishes.  A scheduler must therefore decide
which arrived prompts to admit while accounting for the future memory growth
of every admitted request.

Jaillet et al.~\cite{JailletEtAl2026} formalized this problem for one GPU and
total end-to-end latency.  Request sizes are known at arrival, future arrivals
are adversarial, and an admitted request must advance in every subsequent
round until completion.  They proved an $\Omega(\sqrt n)$ deterministic lower
bound.  No matching general upper bound was known.  The trivial policy that
serves requests serially gives an $n$ upper bound, leaving a polynomial gap.

We prove that the elementary upper order is tight.  The lower-bound mechanism
uses requests with one output token but prompt size approximately half the GPU
memory.  Such requests cannot overlap each other.  Releasing them after a
deterministic scheduler commits to a memory-filling long request creates a
single-machine bottleneck inside the batching model.  The hindsight optimum
avoids the bottleneck by choosing the better of long-first and shorts-first
orders.

The exact quantifier order matters.  For fixed $M$, the online scheduler knows
$M$ but not the eventual number of requests.  We first fix a single hard memory
$M=2(n-1)n$ and only then quantify over every deterministic causal scheduler at
that memory.  Thus the lower bound applies directly to
$\sup_M\inf_A$, without exchanging a supremum and an infimum.

\paragraph{Contribution and scope.}
The result determines the worst-over-memory asymptotic ratio.  It does not
determine the tight two-parameter function $\Rdet(n,M)$ for every fixed $M$.
The hard family also uses prompts of size $M/2$, so restricted-prompt regimes
remain separate questions.

\section{Model and competitive ratio}\label{sec:model}

Fix an integer memory capacity $M\geq1$.  A finite instance $I$ consists of
labeled requests
\[
  q=(a_q,s_q,o_q),
\]
where $a_q\in\mathbb Z_{\geq0}$ is the arrival time and
$s_q,o_q\in\mathbb Z_{>0}$ are the prompt and output lengths.  Individual
feasibility requires $s_q+o_q\leq M$.  At time $a_q$, the complete triple is
revealed.  Before that time the scheduler has no information about $q$.

Time is discrete.  If request $q$ starts in round $b_q$, it is active in the
rounds
\[
  b_q,b_q+1,\ldots,b_q+o_q-1
\]
and completes at $C_q=b_q+o_q$.  In active round $t$, its memory usage is
\begin{equation}\label{eq:memory}
  s_q+(t-b_q)+1.
\end{equation}
Thus its active-round memory values are
$s_q+1,\ldots,s_q+o_q$.  A schedule is feasible if $b_q\geq a_q$ for every
request and
\begin{equation}\label{eq:cap}
  \sum_{q:\,b_q\leq t<b_q+o_q}
    \bigl(s_q+(t-b_q)+1\bigr)\leq M
  \qquad\text{for every }t.
\end{equation}
This start-of-round convention is the one used by the executable checker.  It
is the one-step reindexing of the post-round convention in the integer program
of~\cite{JailletEtAl2026}; both have the same memory states, peak, completion
time, and latency.

Once started, a request cannot pause, restart, migrate, or be rejected.  Every
active request advances by one token in the unique batch each round.  The cost
of a schedule is
\[
  \TEL(I)=\sum_{q\in I}(C_q-a_q).
\]
The hindsight optimum $\OPT(I)$ knows the complete instance but obeys the same
release, nonpreemption, progress, and memory rules.

A deterministic scheduler at memory $M$ assigns starts to every finite valid
instance.  It is \emph{causal} if, whenever two instances have exactly the same
requests revealed through time $t$, every revealed request has the same
started-by-$t$ decision in the two instances.  This observational condition
allows decisions at time $t$ to use all requests arriving at $t$, but no future
request.  The scheduler is one rule across every finite request count and does
not know the final $n$.

For a scheduler $A$, define its exact-$n$, fixed-memory factor by
\[
  \operatorname{CR}_A(n,M)
  =\sup_{I:\,|I|=n}
    \frac{\TEL(I;A)}{\OPT(I)},
\]
where the supremum ranges over valid instances at memory $M$.  A scheduler
that fails to complete a valid finite instance has infinite factor.  Set
\[
  \Rdet(n,M)=\inf_A\operatorname{CR}_A(n,M),
  \qquad
  \Rdet(n)=\sup_{M\geq1}\Rdet(n,M).
\]

\begin{theorem}\label{thm:main}
For every integer $n\geq2$,
\[
  \frac{n-1}{12}\leq\Rdet(n)\leq n.
\]
Consequently, $\Rdet(n)=\Theta(n)$.
\end{theorem}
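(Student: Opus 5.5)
The upper bound and the lower bound are proved separately. The upper bound uses a single uniform policy. The lower bound uses one adversary at the fixed memory $M=2(n-1)n$, applied to an arbitrary causal deterministic scheduler.

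\emph{Upper bound $\Rdet(n)\le n$.} I would use the work-conserving FIFO serial policy: whenever no request is active and some arrived request is waiting, start the earliest-arrived waiting request (ties broken by label). This rule is causal, does not depend on $n$, and is feasible, since each request alone satisfies $s_q+o_q\le M$. It also completes every finite instance. The comparator bound is $\OPT(I)\ge\sum_q o_q$, because every request has latency at least its output length. Fix $q$. Because the policy never idles while a request waits, the interval $[a_q,C_q)$ is covered by two things: the remainder of the request possibly running at time $a_q$, and the full runs of the requests started in FIFO order up to and including $q$. These are distinct requests, so $C_q-a_q\le\sum_p o_p\le\OPT(I)$. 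Summing over the $n$ requests gives $\TEL\le n\,\OPT$.

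\emph{Lower bound.} Fix $n\ge2$ and $M=2(n-1)n$, and let $A$ be any causal deterministic scheduler at memory $M$. The adversary first releases a long request $L=(0,1,M-1)$, which fills the memory at its last round. Its start time $b$ is then read off from $A$ on prefixes. By causality, $A$'s started-by-$t$ decisions for $L$ do not depend on requests arriving after $t$, so $b$ is well defined before any later release. If $A$ never starts $L$ while only $L$ is present, then $A$ fails on a valid instance and the factor is infinite. Otherwise the adversary releases $n-1$ requests $(r,M/2,1)$ at time $r=b+\lceil o_L/2\rceil$. Each of them uses $M/2+1$ memory, so no two can share a round. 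In every round from $r$ until $L$ completes, $L$'s memory exceeds $M/2-1$, so no short request can overlap $L$ either. Hence $A$ serves the short requests one at a time after $b+o_L$. This gives
\[
\TEL(I;A)\ \ge\ (b+o_L)+(n-1)\tfrac{o_L}{2}+\tfrac{(n-1)n}{2}.
\]

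For $\OPT$ I would split on $b$.
\begin{itemize}
\item If $b\ge o_L/2$: the hindsight schedule runs $L$ at time $0$. It finishes by $o_L\le r$, after which the short requests run serially at cost about $(n-1)n/2=M/4$. So $\OPT\le o_L+M/4+O(n)\le 2M$, while $A$ pays at least $(n-1)o_L/2\approx(n-1)M/2$.
\item If $b<o_L/2$: the hindsight schedule keeps $L$ waiting, runs the short requests serially from $r$, and then starts $L$. The cost is at most $(r+n-1+o_L)+(n-1)n/2\le 2o_L+M/4+O(n)\le 3M$, while $A$ still pays at least $(n-1)o_L/2$.
\end{itemize}
In both cases the ratio is at least $(n-1)/6$ up to lower-order terms. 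The constant $12$ leaves slack for the rounding in $M/2$, the ceiling in $r$, and the small cases $n=2,3$.

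The main obstacle will be making this accounting exact with integer rounding. That means checking the non-overlap claim precisely at round $r$ under the memory convention in the capacity constraint (\ref{eq:cap}). It also means verifying the two $\OPT$ bounds with explicit constants valid for every $n\ge2$, not only asymptotically. I would handle this by writing $o_L=M-1$ and $r$ explicitly and bounding each comparator branch by an explicit polynomial in $n$. This is exactly the part that is machine-checked.
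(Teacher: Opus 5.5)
Your proof is correct, and its lower bound is the paper's construction. Since $M$ is even, $\lceil o_L/2\rceil=M/2$, so your release time is exactly the paper's $r=b+M/2$. Your overlap check is its forced-wait lemma, and your two comparator schedules are its long-first and shorts-first schedules. Your threshold $b\ge o_L/2$, i.e.\ $r\ge M$, differs from the paper's $r\ge M-1$ by one round, which is harmless.

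The real difference is the upper-bound policy. The paper uses the deliberately non-work-conserving rule $b_q=a_q+\sum_{p\prec q}o_p$. With it, non-overlap, causality, and $C_q-a_q\le\sum_p o_p$ all read directly off the formula. Your work-conserving FIFO is the more natural policy, and it starts every request no later than the paper's rule. The price is that it needs your busy-interval argument. There, the rounds of $[a_q,C_q)$ must be charged to distinct requests, namely the one running at $a_q$ and those started in $[a_q,b_q]$. Both routes give the factor $n$.

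One small slip: your displayed lower bound on $\TEL(I;A)$ is too large by $(n-1)/2$. The $j$-th short served ($0\le j\le n-2$) completes no earlier than $b+M+j$, so its latency is at least $M/2+j$. The shorts therefore contribute at least $(n-1)M/2+(n-1)(n-2)/2$, and FIFO attains exactly this on the instance. Afterwards you use only $\TEL(I;A)\ge(n-1)o_L/2$, so nothing breaks. The constants also close exactly. In the second branch, $r\le o_L$ gives $\OPT\le2(M-1)+(n-1)+M/4\le3M$, and $\frac{(n-1)(M-1)/2}{3M}\ge\frac{n-1}{12}$ whenever $M\ge2$.
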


\section{The uniform serial upper bound}\label{sec:upper}

We first prove the upper bound using a deliberately conservative serial policy.
Give requests distinct labels and order them lexicographically by
$(a_q,\operatorname{id}_q)$.  Define
\begin{equation}\label{eq:serial}
  b_q=a_q+\sum_{p\prec q}o_p.
\end{equation}
This is not asserted to be work-conserving.  It is useful because it is uniform,
causal, and its cost is immediate.

\begin{lemma}\label{lem:serial}
The starts in \eqref{eq:serial} form a feasible causal schedule.  For every
valid $n$-request instance,
\[
  \TEL(I;A_{\mathrm{ser}})\leq
  n\sum_{q\in I}o_q.
\]
\end{lemma}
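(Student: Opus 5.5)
We verify the three claims about the starts in \eqref{eq:serial} in turn: feasibility, causality, and the cost bound. Write $q_1\prec q_2\prec\cdots\prec q_n$ for the lexicographic order on $(a_q,\operatorname{id}_q)$.

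\textbf{Feasibility.} Release is immediate, since $b_q\geq a_q$ because every $o_p>0$. For memory, the key point is that the active intervals $[b_q,b_q+o_q)$ are pairwise disjoint. Take $p\prec q$. Then $a_p\leq a_q$, and $\sum_{r\prec q}o_r\geq \sum_{r\prec p}o_r+o_p$, so
\[
  b_q\geq a_p+\sum_{r\prec p}o_r+o_p=b_p+o_p=C_p .
\]
Hence at most one request is active in any round $t$. Its usage $s_q+(t-b_q)+1$ is at most $s_q+o_q\leq M$ by individual feasibility, so \eqref{eq:cap} holds. Nonpreemption and progress hold by construction, since each request runs in consecutive rounds.

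\textbf{Causality.} This is the step needing the most care, because $b_q$ is defined through a sum over all predecessors. The point is that the predecessors of $q$ are exactly the requests with arrival before $a_q$, or arrival equal to $a_q$ and smaller label. All of these are revealed by time $a_q$.

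Suppose two instances agree on all requests revealed through time $t$, and let $q$ be revealed, so $a_q\leq t$. Then $q$ has the same predecessor set in both instances, because any $p\prec q$ has $a_p\leq a_q\leq t$. Any request revealed later arrives after $t\geq a_q$ and therefore succeeds $q$. Consequently $b_q$ is identical in the two instances, and so is the predicate $b_q\leq t$.

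The rule is stated uniformly, without reference to $n$. It therefore defines a single causal scheduler across all request counts.

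\textbf{Cost.} We have
\[
  C_q-a_q=\sum_{p\prec q}o_p+o_q\leq\sum_{p\in I}o_p .
\]
Summing over the $n$ requests gives $\TEL(I;A_{\mathrm{ser}})\leq n\sum_{q}o_q$.

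To convert this into the ratio bound of Theorem~\ref{thm:main}, note that $\OPT(I)\geq\sum_q o_q$. This holds because every request needs at least $o_q$ rounds after its arrival, so $C_q-a_q\geq o_q$ in any schedule. The ratio is therefore at most $n$.
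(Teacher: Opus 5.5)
Your proof is correct and follows the paper's argument essentially step for step. The predecessor-sum containment gives $b_p+o_p\leq b_q$ for $p\prec q$, so no two requests overlap and individual feasibility yields \eqref{eq:cap}; causality holds because every predecessor of $q$ arrives by $a_q$; and the cost bound is $C_q-a_q=\sum_{p\prec q}o_p+o_q\leq\sum_{p\in I}o_p$, summed over $n$ requests. Your causality check against the revealed-prefix definition (later arrivals succeed $q$, so $b_q$ is identical in both instances) is a bit more explicit than the paper's but uses the same idea.
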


\begin{proof}
If $p\prec q$, then $a_p\leq a_q$, and the predecessor sum for $q$ contains
the predecessor sum for $p$ together with $o_p$.  Hence
\[
  b_p+o_p\leq b_q.
\]
No two requests overlap.  Each request is individually feasible, so the whole
schedule obeys \eqref{eq:cap}.

Every predecessor of $q$ arrives no later than $q$.  Therefore $b_q$ depends
only on requests revealed by time $a_q$, proving causality even though future
requests and the final count are unknown.  Finally,
\[
  C_q-a_q=\sum_{p\prec q}o_p+o_q
  \leq\sum_{p\in I}o_p.
\]
Summing over the $n$ requests proves the claim.
\end{proof}

Every feasible schedule has $C_q-a_q\geq o_q$ for each request.  Thus
$\OPT(I)\geq\sum_qo_q$, and Lemma~\ref{lem:serial} yields
\begin{equation}\label{eq:upper}
  \Rdet(n,M)\leq n
  \qquad\text{for every }M.
\end{equation}

\section{The wide-short lower bound}\label{sec:lower}

Fix $k\geq1$, set $n=k+1$, and fix the even memory
\begin{equation}\label{eq:M}
  M=2k(k+1).
\end{equation}
Let $A$ be an arbitrary deterministic causal scheduler at this already-fixed
memory.

Release one long request
\[
  L=(0,1,M-1).
\]
If $A$ never starts this individually feasible singleton, its competitive
factor is infinite.  Otherwise let $b$ be its start time.  At
\begin{equation}\label{eq:release}
  r=b+M/2,
\end{equation}
release $k$ short requests
\[
  S_j=(r,M/2,1),\qquad 0\leq j<k.
\]

\begin{lemma}[Causal commitment]\label{lem:causal}
The long request starts at time $b$ in the extended instance.
\end{lemma}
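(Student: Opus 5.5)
The plan is to apply the causality definition directly, comparing the singleton instance $I_0=\{L\}$ with the extended instance $I_1=\{L,S_0,\ldots,S_{k-1}\}$. By construction, $b$ is the start time that $A$ assigns to $L$ in $I_0$. It is finite because we are in the branch where $A$ completes the singleton. The short requests arrive at $r=b+M/2>b$, since $M/2=k(k+1)\geq 2$. Hence for every $t<r$, and in particular every $t\leq b$, the two instances reveal exactly the same requests through time $t$, namely only $L$.

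First I will use causality at $t=b$. In $I_0$ the request $L$ is started by time $b$, so it is also started by time $b$ in $I_1$, which gives $b_L(I_1)\leq b$. Next, if $b\geq1$, I will apply causality at $t=b-1$, where the revealed sets also coincide. In $I_0$, $L$ is not started by $b-1$, so it is not started by $b-1$ in $I_1$ either, which gives $b_L(I_1)\geq b$. If $b=0$, the release rule $b_L\geq a_L=0$ gives the same inequality. Together these yield $b_L(I_1)=b$.

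One point needs care: the causality definition refers to the \emph{same requests revealed through time $t$}, with labels. I will take $L$ to carry the same label in both instances, and note that no $S_j$ is revealed before time $r$, so the revealed sets agree for all $t\leq r-1$. This requires $r-1\geq b$, which holds because $M/2\geq1$.

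I do not expect a real obstacle. The only subtlety is making sure that the ``started-by-$t$'' observations pin down the start time exactly, not just bound it on one side. Using both $t=b$ and $t=b-1$ handles this, and the boundary case $b=0$ is covered by the release constraint.
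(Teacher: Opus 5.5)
Your proposal is correct and matches the paper's argument. Causality at time $b$ gives $b_L(I_1)\le b$, and the reverse inequality comes from the started-by-$t$ decision at an earlier time: you apply it directly at $t=b-1$, while the paper argues by contradiction at an arbitrary $t<b$, which is equivalent. Your explicit checks that $r>b$ (so the revealed prefixes agree through $b$) and the $b=0$ boundary case are details the paper leaves implicit.
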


\begin{proof}
Through time $b$, the singleton and extended instances reveal exactly the same
request.  Causality implies that the long request has started by $b$ in the
extended instance.  If its extended-instance start were some $t<b$, the two
instances would also have identical revealed prefixes through $t$, forcing the
singleton long request to have started by $t$.  This contradicts its singleton
start $b$.
\end{proof}

\begin{lemma}[Forced wait]\label{lem:wait}
Every short request starts no earlier than $b+M-1$ and has latency at least
$M/2$.  Consequently,
\begin{equation}\label{eq:alg-lower}
  \TEL(I;A)\geq kM/2.
\end{equation}
\end{lemma}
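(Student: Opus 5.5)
By Lemma~\ref{lem:causal}, the long request $L$ starts at time $b$ in the extended instance. It is therefore active in rounds $b,\dots,b+M-2$ and completes at $C_L=b+M-1$. The plan is to show that no short request can be active in any of these rounds. Once that is established, each short request must start at or after $b+M-1$, and the latency bound follows directly.

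The key step is a memory computation. In active round $t$, the long request occupies
\[
  1+(t-b)+1=t-b+2
\]
units, by \eqref{eq:memory}. A short request $S_j$ started at round $t$ occupies $M/2+1$ units in its single active round. Overlap at round $t$ is therefore feasible under \eqref{eq:cap} only if
\[
  t-b+2+M/2+1\leq M,
  \qquad\text{that is,}\qquad
  t\leq b+M/2-3.
\]
Every short request arrives at $r=b+M/2$ and satisfies $b_{S_j}\geq r$. So any round $t$ in which $S_j$ could run while $L$ is active has $t\geq b+M/2>b+M/2-3$, and overlap is infeasible. The only remaining rounds for $S_j$ are those after $L$ finishes, which gives
\[
  b_{S_j}\geq b+M-1.
\]

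For the latency, $S_j$ completes at $C_{S_j}=b_{S_j}+1\geq b+M$. Hence
\[
  C_{S_j}-a_{S_j}\geq b+M-(b+M/2)=M/2.
\]
Summing over the $k$ short requests and dropping the nonnegative latency of $L$ gives \eqref{eq:alg-lower}.

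If $A$ fails to complete some short request, its factor is infinite, which is consistent with the convention in Section~\ref{sec:model}; the bound is then vacuous or trivially satisfied. I expect no real obstacle here. The only care needed is getting the off-by-one in the start-of-round memory convention right, so that the threshold $t\leq b+M/2-3$ is correct. The margin is comfortable, because arrival at $b+M/2$ already exceeds that threshold by $3$.
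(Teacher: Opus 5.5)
Your proposal is correct and follows essentially the same argument as the paper. You fix the long request's start at $b$ via the causal commitment lemma, and you show that overlap at any round $t\geq r=b+M/2$ would exceed $M$: the paper phrases this as combined usage at least $M+3$, you as the threshold $t\leq b+M/2-3$. You then get latency at least $M/2$ per short and sum over the $k$ shorts.
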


\begin{proof}
Suppose a short request starts at time $t<b+M-1$.  Release feasibility gives
$t\geq r$.  The still-active long request then uses
\[
  1+(t-b)+1\geq M/2+2
\]
memory, while the short request uses $M/2+1$.  Their combined usage is at least
$M+3$, contradicting \eqref{eq:cap}.  Hence a short starts at or after
$b+M-1$, completes one round later, and has latency at least
\[
  (b+M-1)+1-r=M/2.
\]
Summing over the $k$ shorts proves \eqref{eq:alg-lower}.
\end{proof}

It remains to upper-bound the hindsight optimum by an explicit schedule.  The
shorts have output one and peak memory $M/2+1$, so scheduling them one per round
is feasible.

\paragraph{Late release.}
If $r\geq M-1$, start the long request at time zero and start short $j$ at
$r+j$.  The long finishes before any short begins.  This schedule has
\[
  O_1=M-1+\sum_{j=0}^{k-1}(j+1)
  \leq M+k^2\leq2M.
\]

\paragraph{Early release.}
If $r<M-1$, start short $j$ at $r+j$ and start the long request at $r+k$.
The shorts finish before the long begins.  Its cost is at most
\[
  O_2\leq r+k+(M-1)+k^2\leq4M,
\]
using $r<M$, $k\leq M$, and $k^2\leq M$, all immediate from
\eqref{eq:M}.  Thus in both cases a feasible hindsight schedule has cost at
most $6M$.  Combining this deliberately coarse common bound with
\eqref{eq:alg-lower} gives
\[
  \frac{\TEL(I;A)}{\OPT(I)}
  \geq\frac{kM/2}{6M}=\frac{k}{12}.
\]
Since the memory in \eqref{eq:M} was fixed before $A$ was chosen,
\[
  \Rdet(k+1,M)\geq k/12.
\]
Taking the supremum over memory proves the lower bound in
Theorem~\ref{thm:main}.  Equation~\eqref{eq:upper} proves the upper bound.

\section{Formal and executable verification}\label{sec:verification}

The formal development uses Lean 4.32.2 and Mathlib 4.32.2.  Its canonical
human-quantifier declaration is
\begin{center}
\small\ttfamily
OnlineKvCache.\allowbreak
deterministic\_worst\_memory\_ratio\_theta\_linear\_human\_quantifiers.
\end{center}
It fixes $M=2k(k+1)$ before quantifying over every lower-bound scheduler at that
memory.  The upper theorem quantifies over every fixed $M$.  The supporting
files formalize valid instances, exact per-round memory, TEL, revealed-prefix
causality, both comparator schedules, and positive-denominator cancellation.
Lean reports only the standard axioms \texttt{propext},
\texttt{Classical.choice}, and \texttt{Quot.sound} for the audited declaration.

From the repository root, replay with:
\begin{verbatim}
cd formal/lean_project
~/.elan/bin/lake build \
  LeanProject.OnlineKvCacheHumanQuantifiers
cd ../..
python3 -m unittest \
  tests.test_online_kv_cache_scheduling_eval \
  tests.test_bootstrap_online_kv_cache_sources \
  tests.test_online_kv_cache_linear_formal \
  tests.test_online_kv_cache_human_quantifiers_formal \
  tests.test_formal
python3 -m evaluators.online_kv_cache_scheduling_eval \
  --verify output/tight-online-kv-cache-scheduling/exact-controls.json
\end{verbatim}
The replay passes 30 focused tests and returns
\texttt{TIGHT\_ONLINE\_KV\_CACHE\_EXACT\_CONTROLS\_PASS}.
The canonical hashes are:
\begin{center}
\footnotesize
human-quantifier proof\\
\nolinkurl{d5b7a90989339d5a0ba0536ee1c49a9b91c41f9c22914b6096f78fa1c4180391}\\[2pt]
exact controls\\
\nolinkurl{ca8fd42aa4364ed0faee2c77f41703d84e86b38952ebb41bc6999adae2609a12}.
\end{center}
The exact finite checker independently replays schedules and computes offline
optima on bounded instances.  Its negative controls reject a corrupted memory
progress convention, distinguish TEL from total completion time, and expose
the effect of allowing kill and restart.  A fresh independent review rebuilt
the formal modules, reproduced the controls, checked the source-to-Lean time
shift and quantifier order, and repeated the primary prior-art search.

\section{Related work and scope}\label{sec:related}

The closest theorem is Jaillet et al.'s $\Omega(\sqrt n)$ deterministic lower
bound in the same adversarial-arrival, known-output, nonpreemptive TEL model
\cite{JailletEtAl2026}.  Wang, Ye, and Zhou study the same growing-memory
geometry with all jobs available initially~\cite{WangYeZhou2026}.  Kong et al.
analyze geometry-aware policies in burst and Poisson regimes rather than the
exact adversarial ratio~\cite{KongEtAl2026}.

Feng, Yang, and Zhang permit killing and restarting unfinished jobs in a
non-clairvoyant all-at-once model~\cite{FengYangZhang2026}.  Their subsequent
regime-aware routing result allows arbitrary online arrivals but minimizes
total completion time $\sum_q C_q$, again with kill and restart
\cite{FengLiuYangZhang2026}.  Although
\[
  \sum_q C_q=\TEL+\sum_q a_q,
\]
a multiplicative guarantee for total completion only implies
\[
  \TEL(A)\leq c\TEL(\OPT)+(c-1)\sum_q a_q.
\]
The additive arrival term can be arbitrarily large relative to optimal flow
time, so that theorem does not transfer to the present target.  Stochastic,
fluid, throughput, and preemptive packing models provide useful comparisons but
change either the benchmark or the permitted actions
\cite{AoEtAl2026,ImKulkarniMunagala2015,MitzenmacherShahout2025}.

The present result leaves several questions open.  Most importantly, it does
not determine the tight fixed-memory dependence of $\Rdet(n,M)$.  The lower
family proves linear hardness at $M=2(n-1)n$ and uses prompts of size $M/2$.
Sharper bounds for small fixed memory, bounded prompt sizes, or bounded size
spread may have different asymptotic behavior.  The constants $1/12$ and $1$
in Theorem~\ref{thm:main} are not optimized.

\paragraph{Data and code availability.}
The Lean sources, exact evaluator, source-pin bootstrap, controls, tests, and
paper source form the accompanying reproducibility package.  No empirical data
or private service is required to check the theorem.

\paragraph{Research-system disclosure.}
Beyond, the research system operated by Nth Research Collective, materially
assisted with literature retrieval, hypothesis generation, exact computation,
formalization, proof critique, adversarial review, and drafting.  Ian
D'Ambrosio selected the target, directed the investigation, reconciled the
evidence, verified the final claims, and accepts full responsibility for the
manuscript.  Beyond is not an author.


\begingroup
\small
\begin{thebibliography}{99}

\bibitem{AoEtAl2026}
 R.~Ao, G.~Luo, D.~Simchi-Levi, and X.~Wang,
\newblock Optimizing LLM inference: Fluid-guided online scheduling with memory
constraints,
\newblock arXiv:2504.11320v4, 2026.
\newblock \url{https://arxiv.org/abs/2504.11320v4}.

\bibitem{FengYangZhang2026}
Y.~Feng, Z.~Yang, and Y.~Zhang,
\newblock Competitive non-clairvoyant KV-cache scheduling for LLM inference,
\newblock arXiv:2601.22996v1, 2026.
\newblock \url{https://arxiv.org/abs/2601.22996v1}.

\bibitem{FengLiuYangZhang2026}
Y.~Feng, S.~Liu, Z.~Yang, and Y.~Zhang,
\newblock General non-clairvoyant KV-cache scheduling via regime-aware routing,
\newblock arXiv:2607.09248v1, 2026.
\newblock \url{https://arxiv.org/abs/2607.09248v1}.

\bibitem{ImKulkarniMunagala2015}
S.~Im, J.~Kulkarni, and K.~Munagala,
\newblock Competitive flow time algorithms for polyhedral scheduling,
\newblock in \emph{Proceedings of the 56th Annual IEEE Symposium on Foundations
of Computer Science}, 2015, pp.~506--524.
\newblock \url{https://doi.org/10.1109/FOCS.2015.38}.

\bibitem{JailletEtAl2026}
P.~Jaillet, J.~Jiang, K.~Mellou, M.~Molinaro, C.~Podimata, and Z.~Zhou,
\newblock Online scheduling for LLM inference with KV cache constraints,
\newblock arXiv:2502.07115v5, 2026.
\newblock \url{https://arxiv.org/abs/2502.07115v5}.

\bibitem{KongEtAl2026}
L.~Kong, Q.~Qi, Y.~Ye, and Z.~Zhou,
\newblock Geometry-aware online scheduling for LLM serving: From theoretical
bound to system practice,
\newblock arXiv:2606.22327v2, 2026.
\newblock \url{https://arxiv.org/abs/2606.22327v2}.

\bibitem{MitzenmacherShahout2025}
M.~Mitzenmacher and R.~Shahout,
\newblock Queueing, predictions, and LLMs: Challenges and open problems,
\newblock arXiv:2503.07545v1, 2025.
\newblock \url{https://arxiv.org/abs/2503.07545v1}.

\bibitem{WangYeZhou2026}
M.~Wang, Y.~Ye, and Z.~Zhou,
\newblock LLM serving optimization with variable prefill and decode lengths,
\newblock arXiv:2508.06133v4, 2026.
\newblock \url{https://arxiv.org/abs/2508.06133v4}.

\end{thebibliography}
\endgroup
\end{document}